
\documentclass[conference,a4paper]{IEEEtran}

\addtolength{\topmargin}{9mm}

%
%
\usepackage[utf8]{inputenc} 
\usepackage[T1]{fontenc}
\usepackage{url}
\usepackage{ifthen}
\usepackage[noadjust]{cite}
\usepackage[cmex10]{amsmath} 
\usepackage{amsmath}
\usepackage{amsthm}
\usepackage{amssymb}
\usepackage{algorithm}
\usepackage{algorithmicx}
\usepackage{algpseudocode}
\usepackage{bm}
\usepackage{epstopdf}
\usepackage{xcolor}
\usepackage{graphicx}

\newtheorem{theorem}{Theorem}[section]
\newtheorem{lemma}[theorem]{Lemma}
\newtheorem{definition}[theorem]{Definition}

\newtheorem{corollary}[theorem]{Corollary}

\newcommand{\N}{\mathcal{N}}

\DeclareMathAlphabet{\mathbbold}{U}{bbold}{m}{n}



\interdisplaylinepenalty=2500 

\hyphenation{op-tical net-works semi-conduc-tor}

\begin{document}
\title{From Centralized to Decentralized Coded Caching} 


\author{%
  \IEEEauthorblockN{Yitao~Chen, Alexandros G.~Dimakis}
  \IEEEauthorblockA{University of Texas at Austin\\
                    Austin, TX 78701 USA\\
                    Email: yitaochen@utexas.edu, dimakis@austin.utexas.edu}
  \and
  \IEEEauthorblockN{Karthikeyan Shanmugam}
  \IEEEauthorblockA{IBM Research AI\\
                    Yorktown Heights, NY 10504 USA  \\ 
                    Email: karthikeyanshanmugam88@gmail.com}
}


\maketitle

\begin{abstract}
We consider the problem of designing decentralized schemes for coded caching. In this problem there are $K$ users each caching $M$ files out of 
a library of $N$ total files. The question is to minimize $R$, the number of broadcast transmissions to satisfy all the user demands. Decentralized schemes allow the creation of each cache independently, allowing users to join or leave without dependencies. Previous work showed that to achieve a coding gain $g$, i.e. $R \leq K (1-M/N)/g$ transmissions, each file has to be divided into number of subpackets that is exponential in $g$. 

In this work we propose a simple translation scheme that converts any constant rate centralized scheme into a random decentralized placement scheme that guarantees a target coding gain of $g$. If the file size in the original constant rate centralized scheme is subexponential in $K$, then the file size for the resulting scheme is subexponential in $g$. When new users join, the rest of the system remains the same. However, we require an additional communication overhead of $O(\log K)$ bits to determine the new user's cache state. We also show that the worst-case rate guarantee degrades only by a constant factor due to the dynamics of user arrival and departure.
 \end{abstract}

\textit{A full version of this paper is accessible at: \url{https://yitaochen.github.io/files/codedcache_isit18.pdf}}

\section{Introduction}
 Demand for wireless bandwidth has increased dramatically owing to rise in mobile video traffic \cite{mobile2011global, boccardi2014five}. One of the most promising approaches for design of next generation networks (5G)  is to densify deployment of small/micro/femto cell stations. One main issue is that the backhaul networks required for such a dense deployment is a severe bottleneck. To alleviate this, a vast number of recent works proposed caching highly popular content at users and or at femto cell stations near users \cite{shanmugam2013femtocaching, bastug2014living}. These caches could be populated during off-peak time periods by predictive analytics. This caching at the `wireless edge' is being seen as a fundamental component of 5G networks \cite{boccardi2014five, paschos2016wireless}.
 
  Upon a cache hit, users obtain the files using near-field communication from nearby femto stations or directly retrieve it from their local caches. Another non-trivial benefit is the possibility of coded transmissions leveraging cache content. One or more packets can be XORed by a macro base station and sent. Users can decode the required packets by using local cache content. Potentially, the benefit over and above that obtained only through cache hits can be enormous. A stylized abstract problem that explores this dimension is called the coded caching problem, introduced by Maddah-Ali and Niesen in their pioneering work \cite{maddah2014fundamental}.
  
  In the coded caching problem, $K$ users are managed by a single server through a noiseless broadcast link. Each user demand arises from a library of $N$ files. Each user has a cache memory of $M$ files. Each file consists of $F$ subpackets. There are two phases - a placement and delivery phase. In the placement phase, every user cache is populated by packets of different files from the library. In the delivery phase, user demands are revealed (the choice could be adversarial). The broadcast agent sends a set of coded packets such that each user can decode its desired file using its cache content designed from the placement phase. The objective is to jointly design both phases such that the worst-case number of file transmissions (often called as the \textit{rate}) is at most $R$. The most surprising result is that $R \leq N/M$, that is independent of the number of users can be achieved. This was shown to be information theoretically optimal upto constant factors. There has been a lot of work \cite{maddah2013decentralized,niesen2013coded,ji2017order,ji2014multiple,karamchandani2016hierarchical,hassanzadeh2017rate} extending this order optimal result to various settings - demands arising from a popularity distribution, caching happening at various levels etc.
  
  There is another line of work that focuses on minimizing file size - the number of subpackets $F$ required - for a given worst-case rate. There have been two types of coded caching schemes - a) Centralized and b) Decentralized schemes. Centralized schemes have deterministic and coordinated placement and delivery phases. Specifically, if an additional user arrives, all caches have to be reconfigured. Decentralized schemes have a random placement phase and the objective is to optimize the worst-case rate with high probability over the randomization in the placement phase. For all known decentralized schemes, the random cache content of a new user is independent of the rest of the system. This removes the need for system wide changes when new users arrive and leave the system.
  
   Initial centralized schemes required file sizes exponential in $K$ to obtain constant worst case rate (we always assume ratio $M/N$ is a constant in this work that does not scale with $K$). Subsequent works \cite{Yan2017,yan2016placement,cheng2017optimal,cheng2017coded,shangguan2016centralized,tang2017low} have explored centralized schemes that attain sub-exponential file size and constant worst-case rate. Even linear file size for near-constant rates is feasible in theory although this requires impractically large values of $K$ \cite{ShanmugamISIT}. The original decentralized schemes required exponential file size in $K$ even for a constant \textit{coding gain} of $g$, i.e. $R \leq K/g$ w.h.p \cite{ShanmugamIT}. This was the price required for decentralization in the initial scheme. Subsequent works have reduced the file size to exponentially depend on only $g$ (the target coding gain) independent of the number of users $K$ \cite{ShanmugamIT, ji2015efficient, ji2015efficient2,jin2017structural}. However, there are no decentralized schemes known (as far as the authors are aware) that have file size F scaling subexponentially in the target coding gain $g$. 
  
 \textbf{Our Contributions:} In this work, inspired by results in \cite{jin2017structural} and leveraging ideas from balls and bins literature with power of two choices, we show the following:
  \begin{enumerate}
   \item We provide a simple translation scheme that takes any centralized scheme with constant rate and subexponential file size scaling with the number of users and turns into a decentralized scheme with target coding gain $g$ with file size that is subexponential in $g$. This generic translation scheme when applied to a known centralized scheme gives a feasible decentralized scheme whose file size is subexponential in $\sqrt{g}$.
   \item Our decentralized scheme does not require any change in the rest of the system when a new user joins. However, it requires an additional $O(\log K)$ bits of communication between the server and a newly joining user. We also show that the worst case rate degrades by at most a constant factor when there are not too many adversarial arrivals and departures.
      \item Finally, we show that the centralized scheme with near constant rates and polynomial file size requirements can also be translated into decentralized schemes that provide a polynomial scaling in the target gain $g$.
  \end{enumerate}
  
  In summary, we show that good centralized schemes can be mapped to decentralized schemes with similar performance.  We emphasize that our decentralized schemes are not fully independent (as opposed to all previous decentralized methods), but still allow users to easily join or leave the system.
\section{Problem Setting}
\subsection{Coded Caching Problem}
In this part, we formally define the coded caching problem. Consider $K$ users that request files from a library of size $N$. We are mostly interested in the case when $K < N$. The $N$ files are denoted by $W_1, \ldots, W_N$, consisting of $F$ data packets. Each file packet belongs to a finite alphabet $\chi$. Let $\N  \triangleq \{1,2,\ldots,N\}$ and $\mathcal{K} \triangleq \{1,2,\ldots, K\}$ denote the set of files and the set of users, respectively. Each user has a cache that can store $MF$ packets from the library, $M \in [0, N]$. In the \emph{placement phase}, user caches are populated without knowledge of the user demands. Let $\phi_u$ denote the caching function for user $u$, which maps $N$ files $W_1,\ldots, W_N$ into the cache content $Z_u \triangleq \phi_u(W_1,\ldots, W_N)\in \chi^{MF}$ for user $u \in \mathcal{K}$. Let $\bm{Z} \triangleq (Z_1,\ldots,Z_K)$ denote the cache contents of all the $K$ users. In the \emph{delivery phase}, where users reveal their individual demands $\bm{d} \triangleq (d_1,\ldots, d_K) \in \N^K$, let $\psi$ denote the encoding function for the server, which maps the files $W_1,\ldots, W_N$, the cache contents $\bm{Z}$, and the request $\bm{d}$ into the multicast message $Y \triangleq \psi(W_1,\ldots,W_N,\bm{Z},\bm{d})$
sent by the server over the shared link. Let $\gamma_u$ denote the decoding function at user $u$, which maps the multicast message $Y$, the cache content $Z_u$ and the request $d_u$, to estimate $\hat{W}_{d_u} \triangleq \gamma_u(Y,Z_u,d_u)$ of the requested file $W_{d_u}$ of user $u \in \mathcal{K}$. Each user should be able to recover its requested file from the message received over the shared link and its cache content. Thus, we impose the successful content delivery condition 
\begin{flalign}\label{equ:succDec}
\hat{W}_{d_u}=W_{d_u}, \quad \forall u \in \mathcal{K}.
\end{flalign}
Given the cache size ratio $M/N$, the cache contents $\bm{Z}$ and the requests $\bm{d}$ of all the $K$ users, let $R(M/N, K, \bm{Z},\bm{d})F$ be the length of the multicast message $Y$. Let 
\begin{flalign*}
R(M/N, K, \bm{Z}) \triangleq \max_{\bm{d} \in \N^K} R(M/N, K, \bm{Z},\bm{d})
\end{flalign*}
denote the worst-case (normalized) file transmissions over the shared link. 
The objective of the coded caching problem is to minimize the worst-case file transmissions $R(M/N, K, \bm{Z})$. The minimization is with respect to the caching functions $\{\phi_k: k \in \mathcal{K}\}$, the encoding function $\psi$, and the decoding functions $\{\gamma_k: k \in \mathcal{K}\}$, subject to the successful content delivery condition in (\ref{equ:succDec}). A set of feasible placement and delivery strategies constitutes a coded caching scheme.

\subsection{Two types of schemes}
As we state in the introduction, there are two types of coded caching schemes - a) Centralized Schemes and b) Decentralized Schemes. Now we further divide the decentralized schemes into two kinds in this work for the purpose of illustrating our results in contrast to existing ones.
\begin{enumerate}
\item \textit{Decentralized Type A} The random set of file packets placed in any user $u$'s cache is independent of the rest of the system requiring no coordination in the placement phase when users join the system and leave. Most of the current known (as far as the authors are aware) decentralized schemes are of this kind. 
\item \textit{Decentralized Type B} When a new user $u$ joins, the random set of file packets placed in any users $u$'s cache is dependent of the rest of the system. However, it does not require any change in the rest of the system. We also seek to minimize the number of bits $B$ communicated when the new user's cache state is determined.
\end{enumerate}

\subsection{Objective}
The prime focus in this work is to design Decentralized Schemes of type B such that for a given worst-case rate (with high probability \footnote{We say an event $A$ occurs with high probability (w.h.p.) if $\Pr(A) \ge 1- O(g^{-\alpha})$ for a constant $\alpha \ge 1$.} with respect to the random placement scheme) of at most $K(1-M/N)/g$, for constant $M/N$, the file size $F$, as a function of the coding gain $g$, is kept small as possible. The number of bits communicated $B$ when users join and leave the system also needs to be minimized.


\section{Preliminary}

\subsection{Centralized Schemes - Ruzsa-Szemer\'edi constructions}
In this section, we introduce a class of centralized coded caching schemes called Ruzsa-Szemer\'edi schemes. We describe a specific family of bipartite graphs call {\em Ruzsa Szemer\'edi} bipartite graphs. Then, we review an existing connection between these bipartite graphs and centralized coded caching schemes. 

\begin{definition}
Consider an undirected graph $G(V,E)$. An induced matching $M \subseteq E$ is a set of edges such that a) no two edges in $M$ share a common vertex and b) the subgraph induced by the vertices in the matching contains only the edges in $M$ and no other edge in the original graph $G$.
\end{definition}

\begin{definition}
A bipartite graph $G([F],[K],E)$ is an $(r,t)$-Ruzsa-Szemeredi graph if the edge set can be partitioned into $t$ induced matchings and the average size of these induced matchings is $r$.
\end{definition}

Now, we describe a coded caching scheme-placement and delivery phases-from the construction of a Ruzsa-Szemeredi bipartite graph. 

\begin{theorem}\cite{ShanmugamISIT}
Consider a Ruzsa-Szemer\'edi bipartite graph on vertex sets $[F]$ and $[K]$ such that the minimum right-degree is $c \le F$. Then, for any $M/N \ge 1- c/F$, we have a centralized coded caching scheme with worst case rate $R=t/F$ with system parameters $(K,M,N,F)$.
\end{theorem}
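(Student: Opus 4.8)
The plan is to read the two vertex classes of the Ruzsa--Szemer\'edi graph as subpacket indices and users: the left class $[F]$ indexes the $F$ subpackets into which every file is split, the right class $[K]$ indexes the users, and I interpret an edge $(f,k)\in E$ as the statement that user $k$ does \emph{not} cache subpacket $f$. Write $W_{n,f}$ for subpacket $f$ of file $W_n$. For the placement, I would set the cache of user $k$ to be $Z_k=\{W_{n,f}: n\in[N],\ (f,k)\notin E\}$, so that user $k$ stores subpacket $f$ of \emph{every} file exactly when $(f,k)\notin E$. The number of subpackets of a single file stored by user $k$ is then $F-\deg(k)$, where $\deg(k)$ is the right-degree. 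Using the minimum right-degree bound $\deg(k)\ge c$ together with the hypothesis $M/N\ge 1-c/F$ (equivalently $F(1-M/N)\le c$), I get $F-\deg(k)\le F-c\le MF/N$, so the total load is at most $N\cdot MF/N=MF$ packets, meeting the memory constraint for every user. This shows the placement is feasible.

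For the delivery, fix worst-case demands $\bm{d}=(d_1,\dots,d_K)$ and process the $t$ induced matchings $M_1,\dots,M_t$ one at a time. For matching $M_i$ the server broadcasts the single coded packet $\bigoplus_{(f,k)\in M_i} W_{d_k,f}$, which has the size of one subpacket, i.e. $1/F$ of a file. Since the edge set is the disjoint union of the $t$ matchings, this uses exactly $t$ subpacket transmissions, for a normalized length of $t$ packets and hence a rate of $t/F$; it then remains only to verify correctness.

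Decodability is where the \emph{induced} matching property does all the work, and I expect this to be the crux. Every edge $(f,k)\in E$, i.e. every subpacket missing from some user, lies in exactly one matching $M_i$, so $W_{d_k,f}$ appears in exactly one transmitted XOR. To recover it, user $k$ must cancel every other term $W_{d_{k'},f'}$ with $(f',k')\in M_i$ and $k'\ne k$. Because $M_i$ is a matching, $f'\ne f$; because $M_i$ is induced, the graph contains no edge $(f',k)$, so $(f',k)\notin E$ and user $k$ has cached subpacket $f'$ of every file, in particular $W_{d_{k'},f'}$. Hence user $k$ can strip all unwanted terms and read off $W_{d_k,f}$. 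Ranging over all matchings, every user obtains all of its missing subpackets and, together with the cached ones, recovers $W_{d_k}$, so the successful-delivery condition holds.

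The one point I would be careful about is that this cancellation argument uses nothing about the particular demands $\bm{d}$: the decodability follows purely from the combinatorial (matching and induced) structure, so the same $t$ transmissions succeed for every demand vector. That uniformity over $\bm{d}$ is exactly what a worst-case guarantee requires, and it yields $R=t/F$ with parameters $(K,M,N,F)$.
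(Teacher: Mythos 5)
Your proposal is correct and follows exactly the construction the paper gives in Algorithm~\ref{alg:0} and its surrounding discussion: non-edges encode placement ($Z_k=\{W_{n,f}:(f,k)\notin E\}$), and delivery sends one XOR per induced matching, with the induced property guaranteeing that each user can cancel all interfering terms. Your memory-feasibility check via the minimum right-degree and the decodability argument are the standard details that the paper defers to the cited reference, and both are carried out correctly.
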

With a given $(K,M,N,F)$ Ruzsa-Szemer\'edi bipartite graph $G([F],[K],E)$, an $F$-packet coded caching scheme can be realized by Algorithm 1. In the placement phase, non-edge represents storage actions.  An edge $e \in E$ between $f \in F$ and $k \in K$ is denoted by $(f,k)$. If $(f,k) \notin E$, then file packet $f$ of all files is stored in user $k$'s cache. In the delivery phase, an XOR of all the packets involved in an induced matching is sent. We repeat this XORing process for every induced matching. This policy yields a feasible delivery scheme that satisfies any demand set $\bm{d}$. 


Almost all (as far as the authors are aware) known centralized coded caching schemes belong to the class of Ruzsa-Szemer\'edi schemes. They have been introduced in the literature through several other equivalent formulations (like placement delivery array etc.)\cite{Yan2017,yan2016placement,cheng2017optimal,cheng2017coded,shangguan2016centralized,tang2017low}. 
\begin{algorithm}                               
\caption{Ruzsa-Szemeredi based caching scheme}          
\label{alg:0}                                    
  \begin{algorithmic} 
 	  \Procedure{ Placement}{$G([F],[K],E),\{W_n, n \in \mathcal{N}\}$}
	  	\State Split each file $W_n, n \in \mathcal{N}$ into $F$ packets, i.e., $W_n=\{W_{n,f}: f=1,2,\cdots, F\}$
		\For{ $k\in [1:K]$}
			\State $Z_k \leftarrow \{W_{n,f}: (f,k) \notin E, \forall n = 1,2,\cdots, N \}$
		\EndFor
	  \EndProcedure
	  \Procedure{ Delivery}{$G([F],[K],E),\{W_n, n \in \mathcal{N}\},\bm{d}$}
	  	\For{$s=1,2,\cdots,t$}
			\State Suppose $(f_1,k_1),\ldots, (f_p,k_p)$ represents a $p$-sized induced matching. 
			\State Server sends $\oplus_{j\in [p]} W_{d_{k_j}, f_j}$
		\EndFor
	  \EndProcedure
  \end{algorithmic}
\end{algorithm}
%
In the next section, we define a new `translation' mechanism that generates a decentralized scheme of type B out of an existing class of Ruzsa-Szemer\'edi schemes of constant rate that preserves the efficiency of file size requirements. 

\section{Our Decentralized Scheme}

\subsection{Translation using Balls and Bins Argument}\label{decentsubsec}
Our objective is to specify a decentralized scheme for $K$ users, system parameters $M$ and $N$ and a worst-case rate of at most $K(1-M/N)/g$ w.h.p. First, given the target coding gain $g$, the size of cache memory $M$, the number of files $N$ and the number of users $K$, we decide an appropriate number of virtual users $K'$. We assume that we can construct Ruzsa-Szemer\'edi centralized schemes for $K'=f(M,N,K,g)$ (this function will be specified later), for constant $M/N$ and worst-case rate $R$ which is dependent only on $M$ and $N$ and file size requirement $F$. Consider the cache content of every virtual user $k$ according to this centralized scheme. Let us denote the virtual user's cache content by $C_k \in \chi^{MF}$. Please note that the cache contents of the centralized scheme is only virtual. We specify the random placement scheme for $K$ real users as follows.

\textbf{Placement Scheme:}
For each real user $u \in [1:K]$ in sequence, we pick two virtual cache contents $C_{u_1}$ and $C_{u_2}$ at random. We assign the cache content of the real user $u$ to that virtual cache content which has been least used so far amongst $C_{u_1}$ and $C_{u_2}$. Let us denote by $X_k$ the number of real users which store $C_k$. 

\textbf{Balls and Bins:} We specify a one-one correspondence to a balls and bins system. The number of distinct virtual user cache contents are the bins in the system. There are $K'$ of them. A real user corresponds to a ball. When a ball in placed in the bin, a real user (ball) is assigned the cache content of that virtual user (bin it is placed in). We can easily see that the random placement exactly corresponds to a power of two choices in a standard balls and bin process \cite{azar1999balanced,cole1998balls}.

\textbf{Delivery Scheme:}
 Note that, in a system with $K'' <K'$ users with distinct cache contents $C_1 \ldots C_{K''}$, by using the $(K',M,N)$ Ruzsa-Szemer\'edi delivery scheme with files demanded by users $k> K''$ substituted by a dummy file, it is possible to still guarantee a worst-case rate of $R$ in the delivery phase.
 
We now repeatedly perform the following until all $X_k=0$: Find a set of at most $K$ real users with maximum number of distinct virtual cache contents. Subtract $X_k$ corresponding to those virtual cache contents by $1$. Use the $(K',M,N)$ Ruzsa-Szemer\'edi delivery scheme for these real users and their real demands. Clearly, the total number of worst case transmissions is at most $R* \max_k X_k$. We summarize the decentralized scheme in Algorithm \ref{alg:1}.

\begin{algorithm}[!hbp]                           
\caption{Decentralized Scheme}          
\label{alg:1}                                    
  \begin{algorithmic}
      \State Given $M,N,K,g$, let $K'=f(g,M/N,K)$ (depends on constructions).
      \State Get the cache contents $C_1,C_2 \ldots C_{K'}$ corresponding to the Ruzsa-Szemer\'edi placement scheme (in Algorithm \ref{alg:0}) with parameters $(K',M,N,F)$.
       \Procedure{Sampling}{$\mathcal{C}, K,\{X_k\},u$}      	    
      		\State Uniformly sample a cache content from $\{C_1 \ldots C_{K'}\}$ for the cache of user $u$ twice with replacement, i.e., $C_{u_1},C_{u_2}$.
			\If{$X_{u_1} \leq X_{u_2}$}
				\State $X_{u_1} \leftarrow X_{u_1} + 1$. $Z_u \leftarrow C_{u_1}$
                        \Else
				\State $X_{u_2} \leftarrow X_{u_2} + 1$. $Z_u \leftarrow C_{u_2}$
			\EndIf
      \EndProcedure
      \Procedure{ Placement}{$M,N,K,g$}
           \State Initialize $X_k=0, ~\forall k \in [1:K']$.
          \For{$u=1,2,\cdots, K$}
	   \State Sampling($\mathcal{C}, K,\{X_k\},u$)
	   \EndFor
       \EndProcedure
      \Procedure{Delivery}{$M,N,K,g$}
      \State Let $S_k \leftarrow X_k$. Let ${\cal K} \leftarrow \{1 \ldots K \}$.
      \While {$\max_k S_k >0$}
          \State Find a maximal subset $R \subset {\cal K}$ such that cache contents of all real users assigned in $F$ are distinct.
          \If {$\lvert R \rvert < K'$}
             \State Use Delivery subroutine of Algorithm \ref{alg:0} to satisfy demands of users in $R$ using a $(K',M,N)$ Ruzsa-Szemer\'edi Scheme. This can be done by substituting packets belonging to file demands of users outside set $R$ (since $\lvert R \rvert <K'$) by packets from a dummy file known to all users.         
          \Else
            \State Use Delivery subroutine of Algorithm \ref{alg:0} to satisfy demands of users in $R$ ($\lvert R \rvert=K'$) using a $(K',M,N)$ Ruzsa-Szemer\'edi Scheme.
          \EndIf
      \EndWhile
      \EndProcedure 
  \end{algorithmic}
\end{algorithm}


\subsection{Analysis of the decentralized algorithm}\label{analysisstatic}

\begin{lemma} \label{lem:rate}
The total number of worst-case file transmissions of the delivery scheme in Algorithm \ref{alg:1} is given by:
\begin{flalign*}
R(g,M/N,K,\{X_1 \ldots X_{K'}\})= R* \max_k X_k
\end{flalign*}
where $R$ is the worst-case rate of the $(K',M,N)$ Ruzsa-Szemer\'edi of Algorithm \ref{alg:1}.
\end{lemma}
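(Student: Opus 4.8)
The plan is to establish the claimed rate by directly analyzing the while-loop in the Delivery procedure of Algorithm~\ref{alg:1} and invoking the validity of the Ruzsa-Szemer\'edi delivery subroutine on each distinct-cache subset. First I would observe that the multiset of real users is completely described by the counts $X_1, \ldots, X_{K'}$, where $X_k$ is the number of real users assigned the virtual cache content $C_k$. Since all users sharing a common virtual cache content are indistinguishable from the placement's point of view, the only structure that matters for delivery is how many real users sit on each of the $K'$ distinct cache contents.

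The key step is to verify that each iteration of the while-loop is legitimate and decrements $\max_k S_k$ by exactly one. At the start of each round the algorithm selects a maximal subset $R \subseteq \mathcal{K}$ whose members all hold \emph{distinct} virtual cache contents; this subset therefore looks exactly like a set of $\lvert R\rvert \le K'$ users of a genuine $(K',M,N)$ Ruzsa-Szemer\'edi system. By the delivery guarantee already established (filling in users outside $R$ with a dummy file, as noted in Section~\ref{decentsubsec} and in the algorithm's inner branch), this round satisfies the demands of all users in $R$ at a worst-case cost of $R$ transmissions, where $R$ is the worst-case rate of the $(K',M,N)$ scheme. After serving these users we decrement the corresponding counts $S_k$ by one. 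Because $R$ is chosen maximally, it includes one representative from every currently-occupied bin, so a single round reduces every positive $S_k$ by one; in particular $\max_k S_k$ drops by exactly one.

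It then follows by a straightforward induction on the number of rounds that the loop terminates after exactly $\max_k X_k$ iterations: initially $S_k = X_k$, each round decreases $\max_k S_k$ by one, and the loop halts once $\max_k S_k = 0$. Since each of these $\max_k X_k$ rounds costs at most $R$ transmissions, the total worst-case number of file transmissions is at most $R \cdot \max_k X_k$, which is the claimed expression $R(g, M/N, K, \{X_1, \ldots, X_{K'}\}) = R \cdot \max_k X_k$.

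I expect the only subtle point — the main obstacle, though a mild one — to be the justification that a single maximal distinct-cache round simultaneously decrements \emph{all} nonzero $S_k$ (rather than merely some of them), which is what makes the iteration count equal to $\max_k X_k$ rather than, say, $\sum_k X_k$. This hinges on the observation that as long as at least two bins remain occupied the maximal distinct set picks one user from each occupied bin, and when only one bin remains the final round clears its remaining copies one at a time; in either case the number of rounds is governed by the most heavily loaded bin. The remaining work is routine bookkeeping on the counts, so I would state the distinct-representative property explicitly and let the induction close the argument.
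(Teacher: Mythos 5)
Your proof is correct and follows essentially the same route as the paper's: the paper's (much terser) argument likewise observes that the Ruzsa--Szemer\'edi delivery subroutine is invoked at most $\max_k X_k$ times, each invocation costing at most $R$ transmissions. Your added justification that a maximal distinct-cache subset contains one representative from every occupied bin---so every positive $S_k$ drops by one per round and the loop runs exactly $\max_k X_k$ times---is precisely the detail the paper leaves implicit, and it is sound.
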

\begin{proof}
 The delivery scheme of Algorithm $1$ is called, with possibly dummy user demands, at most $\max_k X_k$ times. Each call produces at most $R$ file transmissions. The proof follows from this.
\end{proof}


As we stated before, the placement has a direct correspondence to a choice of two balls and bins process. There are $K$ balls and $K'$ bins. In sequence, for every ball, two bins are chosen uniformly randomly with replacement and the ball is placed in the bin with least number of balls. From \cite{berenbrink2000balanced}, we have the following Lemma:
\begin{lemma}[\cite{berenbrink2000balanced}]\label{lem:lnlnk}
The maximum number of balls in any bin, achieved by the choice of two policy for balls and bins problem, with $K$ balls and $K'$ bins, $K \ge K'$ is less than $K/K'+\ln \ln K'/\ln 2+9$ with probability at least $1-O((K')^{-\alpha})$, where $\alpha \ge 1$ is a suitable constant.
\end{lemma}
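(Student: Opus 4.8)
The plan is to prove the bound by the classical layered-induction (witness-tree) method for the power-of-two-choices process, specialized to the heavily-loaded regime $K \ge K'$. Write $m = K$ for the number of balls and $n = K'$ for the number of bins, so the average load is $m/n$ and the claim is that the maximum load exceeds $m/n + \log_2\log n + O(1)$ only with probability $O(n^{-\alpha})$. For an integer level $\ell$ let $\nu_\ell$ denote the number of bins whose final load is at least $\ell$; the maximum load is the largest $\ell$ with $\nu_\ell \ge 1$, so it suffices to exhibit a level $\ell^\star = m/n + \log_2\log n + O(1)$ at which we can certify $\nu_{\ell^\star} < 1$ with high probability. The single observation that drives the whole induction is this: a ball can raise a bin from load $\ell-1$ to load $\ell$ only if both of its two independent uniform choices already point to bins of load at least $\ell-1$. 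Since a bin's load only increases over time, the number of such ``tall'' bins at any moment is at most the final count $\nu_{\ell-1}$, so conditioned on $\nu_{\ell-1} \le \mu_{\ell-1}$ each ball contributes to $\nu_\ell$ with conditional probability at most $(\mu_{\ell-1}/n)^2$.

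In the lightly-loaded case $m = n$ this observation closes immediately: writing $\beta_\ell = \nu_\ell/n$ for the fraction of bins of load $\ge \ell$, the expectation bound $\E[\nu_\ell]\le m(\mu_{\ell-1}/n)^2$ becomes $\beta_\ell \lesssim \beta_{\ell-1}^2$, with no dependence on $n$. Anchoring at a constant level $\ell_0$ where a first-moment/Chernoff bound gives $\beta_{\ell_0} \le 1/2$, the quadratic map drives $\beta_\ell$ below $1/n$ after $\log_2\log n + O(1)$ further levels, at which point $\nu_\ell < 1$ and no bin attains that load. This reproduces the $\ln\ln n/\ln 2 = \log_2\log n + O(1)$ additive term of the statement.

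The main obstacle is that this clean recursion breaks down precisely in the heavily-loaded regime we need. For general $m$ the bound reads $\beta_\ell \lesssim (m/n)\,\beta_{\ell-1}^2$, and near the mean a constant fraction of bins have load within $O(1)$ of $m/n$, so $\beta_{\ell-1}$ is a constant while the fixed point of this map sits at $n/m \ll 1$; the map therefore points the wrong way and gives no contraction around the bulk. The resolution, which is the technical heart of \cite{berenbrink2000balanced}, is to track not the absolute load but the excess above $m/n$: one first shows that the load profile concentrates so tightly around $m/n$ that the top $O(\log\log n)$ levels behave like a fresh lightly-loaded process, and only then applies the quadratic contraction to the excess. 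Formally this is carried out either through a witness/history-tree argument, which bounds the probability that a full binary tree of ``both-choices-tall'' placements exists and thereby avoids any inter-ball dependence, or through a layered induction coupled with a Doob/Azuma martingale to convert each level's expectation bound into a high-probability bound. Taking a union bound over the $O(\log\log n)$ active levels keeps the total failure probability at $O(n^{-\alpha}) = O((K')^{-\alpha})$ for a suitable constant $\alpha \ge 1$, matching the statement; since the lemma is quoted verbatim from \cite{berenbrink2000balanced}, I would ultimately invoke their concentration analysis for these heavy-load details rather than reprove them.
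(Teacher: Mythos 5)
The paper gives no proof of this lemma at all---it is imported verbatim from \cite{berenbrink2000balanced}---and your proposal, after a correct sketch of the standard layered induction and an accurate diagnosis of why its quadratic contraction fails around the bulk in the heavily loaded regime $K \gg K'$, likewise defers the essential heavy-load concentration analysis to that same reference. So you take essentially the same approach as the paper (citation of the result), and the expository sketch you add is sound.
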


From Lemma \ref{lem:lnlnk}, we know that $X_{\max} \leq K/K'+\ln \ln K'/\ln 2+O(1)$ probability at least $1- \frac{1}{(K')^{\alpha}}$ for some constant $\alpha$. Therefore, have the following theorem:

\begin{theorem}\label{thm:meta}
Suppose there exists a Ruzsa-Szemer\'edi centralized scheme, with constant (independent of $K'$) worst-case rate $R_c$, constant cache size ratio $\frac{M}{N}$, and subpacketization level $F_c=O(2^{(K')^{\delta}f(R_c,M/N)})$. Then the scheme in Algorithm \ref{alg:1} that uses this centralized scheme has target gain $g$, i.e. the number of file transmissions in the worst case is rate $R_d=\frac{K(1-M/N)}{g}+O(\ln \ln g)$  w.h.p. The subpacketization level required is $F_d=O(2^{g^{\delta} h(R_c,M/N)})$  where $h(R_c,M/N)=f(R_c,M/N)R_c^{\delta}/(1-M/N)^{\delta}$. To obtain the scheme, we set $K'=gR_c/(1-M/N)$ in Algorithm \ref{alg:1}.
 \end{theorem}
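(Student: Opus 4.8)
The plan is to combine the two facts already in hand — Lemma \ref{lem:rate}, which expresses the worst-case rate purely through the maximum bin load $\max_k X_k$, and Lemma \ref{lem:lnlnk}, which controls that load for the power-of-two-choices process — and then feed in the prescribed choice $K' = gR_c/(1-M/N)$. Because the decentralized placement assigns each of the $K$ real users the cache content of one of the $K'$ virtual users, correctness of delivery is inherited from the repeated-delivery argument preceding Lemma \ref{lem:rate}, and the subpacketization of the decentralized scheme is literally that of the underlying $(K',M,N)$ centralized scheme, i.e. $F_d = F_c$. So the whole theorem reduces to two substitutions, one for the rate and one for the file size.

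For the rate, I would start from $R_d = R_c \cdot \max_k X_k$ (Lemma \ref{lem:rate} with $R=R_c$). Lemma \ref{lem:lnlnk} gives, provided $K \ge K'$, that $\max_k X_k \le K/K' + \ln\ln K'/\ln 2 + O(1)$ with probability at least $1 - O((K')^{-\alpha})$. Substituting $K' = gR_c/(1-M/N)$, the leading contribution is
\[
R_c \cdot \frac{K}{K'} = R_c \cdot \frac{K(1-M/N)}{gR_c} = \frac{K(1-M/N)}{g},
\]
which is exactly the target. Since $R_c$ and $M/N$ are constants, $K' = \Theta(g)$, so the remaining terms satisfy $R_c(\ln\ln K'/\ln 2 + O(1)) = O(\ln\ln g)$, yielding $R_d = K(1-M/N)/g + O(\ln\ln g)$. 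Finally, because $K' = \Theta(g)$, the failure probability $O((K')^{-\alpha})$ is $O(g^{-\alpha})$, matching the paper's definition of w.h.p.

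For the subpacketization, I would substitute $K' = gR_c/(1-M/N)$ into $F_c = O\!\left(2^{(K')^{\delta} f(R_c,M/N)}\right)$. Since $(K')^{\delta} = g^{\delta} R_c^{\delta}/(1-M/N)^{\delta}$, the exponent becomes $g^{\delta} \cdot f(R_c,M/N) R_c^{\delta}/(1-M/N)^{\delta} = g^{\delta} h(R_c,M/N)$, and therefore $F_d = O\!\left(2^{g^{\delta} h(R_c,M/N)}\right)$, exactly as claimed with $h(R_c,M/N) = f(R_c,M/N)R_c^{\delta}/(1-M/N)^{\delta}$.

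The argument is a verification of asymptotics rather than a genuinely hard proof, so the points needing care are the two implicit side conditions. First, Lemma \ref{lem:lnlnk} requires $K \ge K'$, i.e. $g \le K(1-M/N)/R_c$, which is precisely the regime in which a coding gain $g$ over $K$ users is meaningful; I would state this as a standing assumption. Second, I must ensure that folding $\ln\ln K'$ into $\ln\ln g$ and the constant $R_c$ into the $O(\cdot)$ hides no residual dependence on $g$: since $\ln K' = \ln g + O(1)$, one has $\ln\ln K' = \ln\ln g + O(1/\ln g)$, so both the $O(\ln\ln g)$ lower-order term and the $O(g^{-\alpha})$ failure probability are clean in $g$. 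I expect this bookkeeping — confirming that nothing in the error terms secretly scales with $K$ or $g$ — to be the only real obstacle.
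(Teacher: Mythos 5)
Your proposal is correct and follows exactly the paper's own proof: apply Lemma \ref{lem:rate} to reduce the rate to $R_c \max_k X_k$, bound the maximum load via Lemma \ref{lem:lnlnk}, and substitute $K' = gR_c/(1-M/N)$ into both the rate expression and $F_c$. Your additional bookkeeping on the side conditions (the requirement $K \ge K'$ and the translation of the failure probability $O((K')^{-\alpha})$ into $O(g^{-\alpha})$) is more careful than the paper's version, which leaves these points implicit.
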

\begin{proof}
The proof can be found in the full version.
\end{proof}

From \cite{shangguan2016centralized}, we have the following lemma,
\begin{lemma}[\cite{shangguan2016centralized}]\label{lem:hype}
There exists an $(r,t)$-Ruzsa Szemer\'edi graph $G([F], [K], E)$ with $t=\binom{n}{a+2}, F=\binom{n}{a}, K=\binom{n}{2}$ for some $a$ and $n=\lambda a$ for some constant $\lambda>1$. Then, $R=t/F=\binom{n}{a+2}/\binom{n}{a}\approx (\lambda-1)^2$ and $M/N=(\binom{n}{a}-\binom{n-2}{a})/\binom{n}{a}\approx\frac{2\lambda-1}{\lambda^2}$ and by Stirling's formula we have:
\begin{flalign*}
F=\binom{n}{\lambda^{-1}n}&=\frac{1+o(1)}{\sqrt{2\pi \lambda^{-1}(1-\lambda^{-1})n}}\cdot 2^{nH(\lambda^{-1})}\\
&=O(K^{-1/4}\cdot 2^{\sqrt{2K} H(\lambda^{-1})}),
\end{flalign*}
where $H(x)=-x\log_2 x-(1-x)\log_2(1-x)$ for $0 < x <1$ is the binary entropy function. It is easy to see that under such choice of parameters, $R$ and $M/N$ are both constants independent of $K$ and $F$ grows sub-exponentially with $K$.
\end{lemma}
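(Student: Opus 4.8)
The plan is to exhibit an explicit combinatorial construction and then verify the stated parameters by direct counting and asymptotic estimates. First I would fix a ground set $[n]$ and take the left vertex class $[F]$ to be the family of $a$-subsets of $[n]$ (so $F=\binom{n}{a}$), the right vertex class $[K]$ to be the family of $2$-subsets of $[n]$ (so $K=\binom{n}{2}$), and I would index the candidate matchings by the $(a+2)$-subsets of $[n]$ (so there are $\binom{n}{a+2}$ of them). I join an $a$-set $A$ to a $2$-set $P$ by an edge exactly when $A\cap P=\emptyset$; such an edge is then assigned to the matching $M_S$ indexed by $S=A\cup P$, which has size $a+2$. Because $S$ together with the pair $(A,P)=(S\setminus P,P)$ determine one another, every edge lies in exactly one $M_S$, so the $M_S$ partition $E$.

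The structural heart of the argument, and the step I expect to be the main obstacle, is checking that each $M_S$ is an \emph{induced} matching rather than merely a matching. That it is a matching is immediate: distinct $2$-subsets $P_1\neq P_2$ of $S$ give distinct left endpoints $S\setminus P_1\neq S\setminus P_2$ and distinct right endpoints, so no two edges of $M_S$ share a vertex. For the induced property I would take two edges $(S\setminus P_1,P_1)$ and $(S\setminus P_2,P_2)$ of $M_S$ and show the cross pair $(S\setminus P_1,P_2)$ is a non-edge: since $P_1\neq P_2$ are both $2$-subsets of $S$, we have $P_2\not\subseteq P_1$, hence $P_2\cap(S\setminus P_1)=P_2\setminus P_1\neq\emptyset$, so by the edge rule ``$A\cap P=\emptyset$'' this pair carries no edge. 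This confirms the $(r,t)$-Ruzsa--Szemer\'edi structure with $t=\binom{n}{a+2}$, $F=\binom{n}{a}$, $K=\binom{n}{2}$.

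Next I would read off the parameters. Each right vertex $P$ has degree $\binom{n-2}{a}$ (the $a$-subsets of $[n]\setminus P$), so the graph is right-regular with minimum right-degree $c=\binom{n-2}{a}$; invoking the earlier Ruzsa--Szemer\'edi-to-caching theorem then gives $M/N=1-c/F=\bigl(\binom{n}{a}-\binom{n-2}{a}\bigr)/\binom{n}{a}$ and worst-case rate $R=t/F$. Rewriting these ratios as falling-factorial quotients and substituting $n=\lambda a$, I obtain $R=(n-a)(n-a-1)/\bigl((a+2)(a+1)\bigr)\to(\lambda-1)^2$ and $c/F=(n-a)(n-a-1)/\bigl(n(n-1)\bigr)\to(\lambda-1)^2/\lambda^2$ as $a\to\infty$, whence $M/N\to 1-(\lambda-1)^2/\lambda^2=(2\lambda-1)/\lambda^2$. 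Both limits depend only on $\lambda$, which establishes that $R$ and $M/N$ are constants independent of $K$.

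Finally, for the subpacketization I would apply the standard Stirling estimate $\binom{n}{pn}=\frac{1+o(1)}{\sqrt{2\pi p(1-p)n}}\,2^{nH(p)}$ with $p=\lambda^{-1}$ to $F=\binom{n}{\lambda^{-1}n}$, which yields the displayed first line. To convert this into a bound in $K$, I use $K=\binom{n}{2}=n(n-1)/2$, so $n=(1+o(1))\sqrt{2K}$; substituting makes the prefactor $\Theta(n^{-1/2})=O(K^{-1/4})$ and the exponent $nH(\lambda^{-1})=\sqrt{2K}\,H(\lambda^{-1})$, giving $F=O\bigl(K^{-1/4}2^{\sqrt{2K}H(\lambda^{-1})}\bigr)$. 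Since the exponent grows like $\sqrt{K}$, which is $o(K)$, the file size is sub-exponential in $K$, completing the verification.
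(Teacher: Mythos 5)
Your proposal is correct and follows essentially the same route as the paper: the paper imports this lemma from \cite{shangguan2016centralized} without reproving the construction, and your disjointness-based bipartite graph (left vertices the $a$-subsets, right vertices the $2$-subsets, edge iff $A\cap P=\emptyset$, matchings indexed by the $(a+2)$-sets $S=A\cup P$) is precisely the cited construction, while your parameter and Stirling computations reproduce the asymptotics already displayed in the lemma statement, including the substitution $n=(1+o(1))\sqrt{2K}$ (which is legitimate here since $n-\sqrt{2K}$ is bounded, so $2^{nH(\lambda^{-1})}=O\bigl(2^{\sqrt{2K}H(\lambda^{-1})}\bigr)$). Your verification of the induced-matching property via $P_2\setminus P_1\subseteq (S\setminus P_1)\cap P_2$ is exactly the right check and fills in the one step the paper leaves implicit to the citation.
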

Apply Theorem \ref{thm:meta} with Lemma \ref{lem:hype}, we have the following Corollary,
\begin{corollary}
For the Ruzsa Szemer\'edi centralized scheme in Lemma \ref{lem:hype}, 
we have a corresponding decentralized scheme with $R_d=\frac{K(1-M/N)}{g}+O(\ln \ln g)$ and subpacketization level $F=2^{\tilde{O}(g^{1/2})}$, where $\tilde{O}$ means $\log g$, $\mathrm{poly}(\lambda)$ terms are omitted.
\end{corollary}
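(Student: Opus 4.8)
The plan is to treat this corollary as a direct instantiation of Theorem~\ref{thm:meta} using the explicit Ruzsa-Szemer\'edi family supplied by Lemma~\ref{lem:hype}, so the proof is essentially a substitution together with some constant-tracking. First I would record the relevant parameters of the family in Lemma~\ref{lem:hype}: fixing $\lambda>1$ gives constant rate $R_c \approx (\lambda-1)^2$ and constant cache ratio $M/N \approx (2\lambda-1)/\lambda^2$, while the subpacketization is $F_c = O(K^{-1/4}\cdot 2^{\sqrt{2K}\,H(\lambda^{-1})})$, which grows subexponentially in the number of (virtual) users.

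Next I would match this subpacketization against the hypothesized form $F_c = O(2^{(K')^{\delta} f(R_c,M/N)})$ in Theorem~\ref{thm:meta}. Discarding the prefactor $K^{-1/4}$ (which is at most $1$ and only decreases $F_c$), the dominant exponent is $\sqrt{2K}\,H(\lambda^{-1})$, so one reads off $\delta = 1/2$ and $f(R_c,M/N) = \sqrt{2}\,H(\lambda^{-1})$. With these identifications I would apply Theorem~\ref{thm:meta} verbatim, setting $K' = gR_c/(1-M/N)$. The rate conclusion is then immediate from the theorem: $R_d = K(1-M/N)/g + O(\ln\ln g)$, where the $O(\ln\ln g)$ term follows because $\ln\ln K' = \ln\ln\!\big(gR_c/(1-M/N)\big) = \ln\ln g + o(1)$ once $R_c$ and $M/N$ are constants.

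For the subpacketization, the theorem delivers $F_d = O(2^{g^{1/2} h(R_c,M/N)})$ with $h(R_c,M/N) = f(R_c,M/N)\,R_c^{1/2}/(1-M/N)^{1/2} = \sqrt{2}\,H(\lambda^{-1})\sqrt{R_c/(1-M/N)}$. Since $R_c$, $M/N$, and hence $h$, are functions of $\lambda$ alone, the exponent is $g^{1/2}$ multiplied by a $\poly(\lambda)$ constant, giving $F_d = 2^{\tilde{O}(g^{1/2})}$. As a consistency check I would evaluate $F_c$ directly at $K = K'$ and verify that the exponent $\sqrt{2K'}\,H(\lambda^{-1}) = \sqrt{g}\,\sqrt{2R_c/(1-M/N)}\,H(\lambda^{-1})$ agrees with what the theorem predicts.

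The only points needing care—and the closest thing to an obstacle here—are bookkeeping rather than substance. I must confirm that the regime $K \ge K'$ required by Lemma~\ref{lem:lnlnk} holds, i.e.\ $g \le K(1-M/N)/R_c$, which is precisely the meaningful range for a target gain $g$; and I must account for the fact that Lemma~\ref{lem:hype} provides graphs only at the discrete values $K = \binom{n}{2}$ with $n = \lambda a$, so $K'$ must be rounded up to the nearest feasible value. This rounding perturbs the exponent by a lower-order amount that is absorbed into the $\tilde{O}$ (alongside the $\poly(\lambda)$ constant and any $\log g$ slack), leaving the headline $g^{1/2}$ scaling intact.
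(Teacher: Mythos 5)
Your proposal is correct and takes essentially the same route as the paper: the paper obtains this corollary simply by applying Theorem~\ref{thm:meta} to the scheme of Lemma~\ref{lem:hype}, which is exactly your substitution with $\delta = 1/2$, $f(R_c,M/N) = \sqrt{2}\,H(\lambda^{-1})$, and $K' = gR_c/(1-M/N)$. Your extra bookkeeping (checking the regime $K \ge K'$ and rounding $K'$ up to a feasible value of the form $\binom{n}{2}$) is sound and only adds rigor beyond what the paper explicitly records.
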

For non-constant rate and constant cache size ratio $M/N$, we have the following theorem:
\begin{theorem}\label{thm:kdelta}
Suppose there exists a Ruzsa-Szemer\'edi centralized scheme, with rate $R_c \leq (K')^{\delta}$, $\delta \in (0,1)$ and constant cache size ratio $M/N$, and subpacketization level $F_c=p(K')$ where $p(K')$ is a polynomial in $K'$. Then the scheme in Algorithm~\ref{alg:1} that uses this centralized scheme has target gain $g$, i.e., the number of file transmissions in the worst case is rate $R_d=\frac{K(1-M/N)}{g}+O(g^{\delta/(1-\delta)}\ln \ln g)$ w.h.p. The subpacketization level required is $F_d=p \left((\frac{g}{(1-M/N)})^{1/(1-\delta)} \right)$. To obtain the scheme, we set $K'=(\frac{g}{1-M/N})^{1/(1-\delta)}$.
\end{theorem}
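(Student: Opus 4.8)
The plan is to follow the exact same template as the proof of Theorem~\ref{thm:meta} (which the excerpt defers to the full version), but now track how the non-constant rate $R_c \le (K')^\delta$ and the polynomial subpacketization $F_c = p(K')$ propagate through the balls-and-bins analysis. First I would combine Lemma~\ref{lem:rate} and Lemma~\ref{lem:lnlnk}: the worst-case delivery rate of Algorithm~\ref{alg:1} is $R_d = R_c \cdot X_{\max}$, where w.h.p. $X_{\max} \le K/K' + \ln\ln K'/\ln 2 + O(1)$. The whole proof is then a matter of choosing $K'$ so that the leading term $R_c \cdot K/K'$ matches the target $K(1-M/N)/g$, and then verifying that the additive error term and the file size come out as claimed.

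Second, I would solve for $K'$. To make the dominant term equal the target we need $R_c/K' = (1-M/N)/g$, i.e. $K' = gR_c/(1-M/N)$. But here $R_c$ itself depends on $K'$ through $R_c \le (K')^\delta$, so substituting gives $K' = g (K')^\delta/(1-M/N)$, which rearranges to $K' = \left(g/(1-M/N)\right)^{1/(1-\delta)}$ — exactly the prescribed value. With this choice $R_c \le (K')^\delta = \left(g/(1-M/N)\right)^{\delta/(1-\delta)}$, so the leading rate term is indeed $K(1-M/N)/g$. The additive error is $R_c \cdot (\ln\ln K'/\ln 2 + O(1))$; since $\ln\ln K' = O(\ln\ln g)$ and $R_c = O(g^{\delta/(1-\delta)})$, this error is $O(g^{\delta/(1-\delta)}\ln\ln g)$, matching the statement.

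Third, the subpacketization claim is immediate once $K'$ is fixed: the real scheme reuses the virtual users' cache contents verbatim, so $F_d = F_c = p(K') = p\!\left(\left(g/(1-M/N)\right)^{1/(1-\delta)}\right)$, which is the stated polynomial-in-$g$ file size. I would also remark that the high-probability guarantee $1 - O((K')^{-\alpha})$ from Lemma~\ref{lem:lnlnk} translates into $1 - O(g^{-\alpha'})$ in the variable $g$ (since $K'$ is a fixed power of $g$), consistent with the paper's definition of w.h.p.\ stated in the footnote.

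The step I expect to require the most care is confirming that the leading term $K/K'$ genuinely dominates and that $K \ge K'$ holds, so that Lemma~\ref{lem:lnlnk} applies — this needs $K \ge \left(g/(1-M/N)\right)^{1/(1-\delta)}$, i.e. the target gain $g$ is not too large relative to the number of real users, which is the natural regime of interest. A secondary subtlety is that $R_c \le (K')^\delta$ is only an upper bound on the rate, so I must be careful to state the error term as an $O(\cdot)$ upper bound throughout rather than an equality; the inequality direction is exactly what we want for a worst-case rate guarantee, so no difficulty arises there beyond bookkeeping.
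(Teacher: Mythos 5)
Your proposal is correct and follows essentially the same route as the paper's proof: combine Lemma~\ref{lem:rate} with Lemma~\ref{lem:lnlnk} to get $R_d \le R_c\bigl(K/K' + \ln\ln K'/\ln 2 + O(1)\bigr)$, set $K'=\bigl(g/(1-M/N)\bigr)^{1/(1-\delta)}$ so that $K/(K')^{1-\delta}=K(1-M/N)/g$ and the error term becomes $O(g^{\delta/(1-\delta)}\ln\ln g)$, and read off $F_d = p(K')$. Your fixed-point derivation of $K'$ and the remarks on $K\ge K'$ and on converting the $1-O((K')^{-\alpha})$ guarantee into the variable $g$ are sensible additions, but they do not change the argument, which matches the paper's.
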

\begin{proof}
The proof can be found in the full version.
\end{proof}

\subsection{Overhead analysis for the dynamic version of the decentralized scheme}
 We consider the dynamics of user arrival and departure. 
 When a user leaves the system, the user's cache content is deleted and if the user had cached $C_k$ (Recall from Section \ref{decentsubsec}, that this is the cache content of the $k$-th virtual user from Section \ref{decentsubsec}), $X_k$ is decreased by $1$. When a new user $u$ joins the system, then the subroutine $\mathrm{Sampling}\left(\mathcal{C}, K,\{X_k\},u\right)$ from Algorithm \ref{alg:1} is executed to determine the cache content of user $u$ (i.e. $Z_u$). The comparison between $X_{u_1}$ and $X_{u_2}$ in the procedure $\mathrm{Sampling}(\cdot)$ involves an additional $3 \log K$ bits of communication overhead between user $u$ and the central server. Note that, the dynamics of user arrivals and departure does not change the cache contents of users already in the system.
 
 The worst-case rate during delivery is directly proportional to $\max_k X_k$ according to Lemma \ref{lem:rate}. We show that, despite the dynamics, $\max_k X_k$ remains the same upto constant factors w.h.p provided the number of adversarial departures and arrivals is bounded. We recall that the real users represent the balls and the virtual users or their distinct cache contents $C_i$ represent the bins. $\max_k X_k$ is the size of the maximum bin. 

For the analysis, let us first define the balls and bins process with adversarial deletions/additions. Consider the polynomial time process where in the first $K$ steps, a new ball is inserted into the system (the system is initiated with $K$ users). At each subsequent time step, either a ball is removed or a new ball is inserted in the system, provided that the number of balls present in the system never exceeds $K$. 
 Suppose that an adversary specifies the full sequence of insertions and deletions of balls in advance, without knowledge of the random choice of the new balls that will be inserted in the system (i.e., suppose we have an oblivious adversary).

Our proof uses Theorem 1 from \cite{cole1998balls} and Theorem 3.7 from \cite{azar1999balanced} to obtain performance guarantees as dynamic users join and leave the system. 
\begin{theorem}\label{thm:dc}
For any fixed constant $c_1$ and $c_2$ such that $(K')^{c_2} > K$, if the balls and bins process with adversarial deletions runs for at most $(K')^{c_2}$ times steps, then the maximum load of a bin during the process is at most $O(L/K')+\ln \ln K'/\ln 2+O(c_1+c_2)$, with probability at least $1-o(1/(K')^{c_1})$.
\end{theorem}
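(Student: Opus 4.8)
The plan is to reduce Theorem~\ref{thm:dc} to a fixed-time load estimate for the two-choice process with deletions, and then to amplify that estimate, by a union bound, so that it holds at every one of the $(K')^{c_2}$ time steps at once. Throughout I keep the correspondence from Section~\ref{decentsubsec}: real users are balls, the $K'$ distinct virtual cache contents $C_1,\dots,C_{K'}$ are bins, and $\max_k X_k$ is the maximum bin load; the user arrival/departure dynamics are precisely the insertion/deletion sequence fixed in advance by the oblivious adversary, and each insertion uses the power-of-two-choices rule of the $\mathrm{Sampling}$ subroutine.

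First I would freeze a single time step $t\le (K')^{c_2}$ and look at the set of balls present at that instant. Because the number of balls never exceeds $K$, the average load at time $t$ is at most $L/K'\le K/K'$. Invoking Theorem~1 of \cite{cole1998balls} together with Theorem~3.7 of \cite{azar1999balanced}, which control the maximum load of the two-choice allocation under an oblivious adversarial deletion sequence, I would obtain a fixed-time tail bound with a free slack parameter $\gamma$: the maximum load at time $t$ exceeds $O(L/K')+\ln\ln K'/\ln 2+O(\gamma)$ with probability at most $(K')^{-\gamma}$. This fixed-time statement is the dynamic analogue of Lemma~\ref{lem:lnlnk}, and it is exactly the content I would borrow rather than reprove, since the cited witness-tree / layered-induction arguments are what certify that deletions do not destroy the doubly-logarithmic concentration even though the balls alive at time $t$ were inserted over an adversarially chosen history.

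Next I would set $\gamma=c_1+c_2+1$. This makes the additive slack in the load bound $O(\gamma)=O(c_1+c_2)$, matching the claimed expression, and drives the per-step failure probability down to $(K')^{-(c_1+c_2+1)}$. A union bound over the at most $(K')^{c_2}$ steps then bounds the probability that the load estimate fails at any step by $(K')^{c_2}\cdot (K')^{-(c_1+c_2+1)}=(K')^{-(c_1+1)}=o\!\left(1/(K')^{c_1}\right)$, so with the claimed probability the bound $O(L/K')+\ln\ln K'/\ln 2+O(c_1+c_2)$ holds simultaneously for the whole run; the hypothesis $(K')^{c_2}>K$ simply guarantees the run is long enough to pass through the initial insertion of the $K$ balls.

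I expect the main obstacle to be the passage from a fixed-time guarantee to a guarantee that holds throughout the process. The configurations at successive steps are strongly correlated, and the presence of deletions prevents the usual shortcut of dominating the process by a deletion-free allocation. My strategy resolves this in two layers: the history-dependence \emph{within} a single step is delegated to the cited balls-and-bins-with-deletions results, while the correlation \emph{across} steps is handled by the crude but sufficient union bound, whose cost is affordable only because the two-choice tail decays polynomially in $K'$ with an exponent that grows linearly in the slack $\gamma$. The one quantitative point to verify carefully is that this linear dependence is genuinely available in the cited theorems, since it is what lets the additive term stay $O(c_1+c_2)$ rather than blowing up when we push the per-step failure probability below $(K')^{-(c_1+c_2)}$.
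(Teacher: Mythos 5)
There is a genuine gap, and it sits exactly where you flagged your ``one quantitative point to verify'': the fixed-time black box you plan to borrow does not exist in the cited works in the form your argument needs. Theorem 3.7 of \cite{azar1999balanced} concerns the infinite process in which the deleted ball is chosen uniformly at random, not by an oblivious adversary, while Theorem 1 of \cite{cole1998balls} does handle oblivious adversarial deletions but is itself already a statement about the maximum load over an entire polynomial-length run --- neither is a per-time-step tail bound with a free slack $\gamma$ entering as an additive $O(\gamma)$ load and a $(K')^{-\gamma}$ failure probability. More importantly, both citations treat the balanced case of $n$ balls in $n$ bins, whereas Theorem \ref{thm:dc} must cover $K$ balls in $K'$ bins where $K/K'$ can be polynomially large (the hypothesis only requires $(K')^{c_2}>K$). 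The $O(L/K')$ leading term is precisely what the paper has to create by hand: its proof restarts the layered induction at height $x=\lceil eK/K'\rceil$ with $\beta_x=(K')^2/(2eK)$ and recursion $\beta_i = eK\beta_{i-1}^2/(K')^2$, so the heavily loaded regime is an \emph{extension} of the cited results, not a corollary of them. Your proposal silently assumes this generalization comes along with the citation.

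The deeper structural problem is that a fixed-time bound for this process cannot be isolated from a time-uniform one, so your decomposition (black-box bound at a frozen time $t$, then a union bound over $t$) does not reduce the work --- it defers all of it to a citation that does not contain it. The balls alive at time $t$ received their heights at earlier, adversarially scattered insertion times, and bounding the probability that such a ball has height at least $i+1$ requires that $\nu_{\ge i}(s)\le \beta_i$ at its own insertion time $s$, i.e., control of every lower level at every earlier time. This is why the paper's proof defines the events $\mathcal{E}_i$ as holding for \emph{all} $t=1,\ldots,T$ simultaneously and pays the union-bound factor $T$ inside each layer of the induction, via $\Pr(\neg\mathcal{E}_{i+1}) \le T\Pr(B(K,p_i)\ge \beta_{i+1}) + \Pr(\neg\mathcal{E}_i)$, rather than once at the end over a fixed-time event. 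Your budget arithmetic --- slack $c_1+c_2+1$, union over $(K')^{c_2}$ steps, final failure probability $O\bigl((K')^{-(c_1+1)}\bigr)=o\bigl(1/(K')^{c_1}\bigr)$ --- does mirror the paper's, so if the fixed-time estimate were available your proof would close; but establishing that estimate \emph{is} the theorem, and it requires the time-uniform layered argument (the $Y_t$ indicators, the conditioning on $\omega_1,\ldots,\omega_{t-1}$, and the binomial domination) that the paper carries out in its appendix.
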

\begin{proof}
Please refer to the full version for a self-contained proof that extends results from previous work.
\end{proof}


\section{Conclusion}
In this work we show a simple translation scheme that converts any constant rate centralized scheme into a random decentralized placement scheme that guarantees a target coding gain of $g$. We show the worst-case rate due to the dynamics of user arrival and departure degrades only by a constant factor. 

\bibliographystyle{IEEEtran}
\bibliography{refruzsa}

\newpage
\appendix

\subsection{Proof of Theorem \ref{thm:meta}}
Consider a Ruzsa-Szemer\'edi centralized scheme with constant $R_c$, according to Lemma \ref{lem:rate} and \ref{lem:lnlnk}, the rate in the corresponding decentralized scheme is 
\begin{flalign*}
R_d&= R_c \max_{k}X_k \\
&\le R_c \left(\frac{K}{K'} + \frac{\ln \ln K'}{\ln 2}+O(1) \right)\\
&= \frac{K(1-M/N)}{g} + O(\ln \ln g).
\end{flalign*}
Here, we set $K'=gR_c/(1-M/N)$. Substituting $K'=gR_c/(1-M/N)$ into the number of subpackets required in the centralized scheme $F_c$, then that required for Algorithm \ref{alg:1} is:
\begin{flalign*}
F_d=O(2^{g^{\delta} h(R_c,M/N)}),
\end{flalign*}
where $h(R_c,M/N)=f(R_c,M/N)R_c^{\delta}/(1-M/N)^{\delta}$.
\subsection{Proof of Theorem \ref{thm:kdelta}}
Consider a Ruzsa-Szemer\'edi centralized scheme with rate $R_c=(K')^{\delta}$, $\delta \in (0,1)$, according to Lemma \ref{lem:rate} and \ref{lem:lnlnk}, the rate in the corresponding decentralized scheme is 
\begin{flalign*}
R_d&= R_c \max_{k}X_k \\
&\le R_c \left(\frac{K}{K'} + \frac{\ln \ln K'}{\ln 2} + O(1) \right)\\
&= \frac{K}{(K')^{1-\delta}}+(K')^{\delta} \left( \frac{\ln \ln K'}{\ln 2} + O(1)\right) \\
&= \frac{K(1-M/N)}{g} + O(g^{\delta/(1-\delta)}\ln \ln g).
\end{flalign*}
Here, we set $K'=(g/(1-M/N))^{1/(1-\delta)}$. Substituting $K'=(g/(1-M/N))^{1/(1-\delta)}$ into the number of subpackets required in the centralized scheme, the number of subpackets required for Algorithm \ref{alg:1} is:
\begin{flalign*}
F_d=O(\text{p}(g^{1/(1-\delta)})).
\end{flalign*}

\subsection{Proof of Theorem \ref{thm:dc}} 

For a vector $\bm{v}=(v_1,v_2,\ldots)$, let $P_2(\bm{v})$ be the following process: at time steps 1 through $K$, $K$ balls are placed into $K'$ bins sequentially, with each ball going into the least loaded of $2$ bins chosen independently and uniformly at random. After these balls are placed, deletions and insertions alternate, so that at each subsequent time step $K+j$, first the ball inserted at time $v_j$ is removed, and then a new ball is placed into the least loaded of $2$ bins chosen independently and uniformly at random. (Actually we do not require this alternation; the main point is that we have a bound, $K$, on the number of balls in the system at any point. The alternation merely makes notation more convenient.)

We assume the vector $\bm{v}$ is suitably defined so that at each step an actual deletion occurs; that is, the $v_j$ are unique and $v_j \le K+j-1$. Otherwise $\bm{v}$ is arbitrary, although we emphasize that it is chosen before the process begins and does not depend on the random choices made during the process.

We adopt some of the notation of \cite{azar1999balanced}. Each ball is assigned a fixed {\em height} upon entry, where the height is the number of balls in the bin, including itself. The height of the ball placed at time $t$ is denoted by $h(t)$. The load of a bin at time $t$ refers to the number of balls in the bin at that time. We let $\mu_{\ge k}(t)$ denote the number of balls hat have height at least $k$ at time $t$, and $\nu_{\ge k}(t)$ be the number of bins that have load at least $k$ at time $t$. Note that if a bin has load $k$, it must contain some ball of height at least $k$. Hence $\mu_{\ge k}(t) \ge \nu_{\ge k}(t)$ for all times $t$. Finally, $B(K,p)$ refers to a binomially distributed random variable based on $K$ trials each with probability $p$ of success. 

We extends the original Theorem 3.7 of \cite{azar1999balanced} and Theorem 1 of \cite{cole1998balls}, by determining a distribution on the heights of the balls that holds for polynomially many steps, regardless of which $L$ balls are in the system at any point in time.

Let $\mathcal{E}_i$ be the event that $\nu_{\ge i}(t) \le \beta_i$ for time steps $t=1,\ldots,T$, where the $\beta_i$ will be revealed shortly. We want to show that at time $t$, $1 \le t \le T$,
\begin{flalign*}
\Pr(\mu_{\ge i+1} > \beta_{i+1} | \mathcal{E}_i)
\end{flalign*}
is sufficiently small. That is, given $\mathcal{E}_i$, we want $\mathcal{E}_{i+1}$ to hold as well. This probability is hard to estimate directly. However, we know that since the $2$ choices for a ball are independent, we have 
\begin{flalign*}
\Pr(h(t) \ge i+1|\nu_{\ge i}(t-1)) = \frac{(\nu_{\ge i}(t-1))^2}{K'^2}.
\end{flalign*}
We would like to bound for each time $t$ the distribution of the number of time steps $j$ such that $h(j) \ge i+1$ and the ball inserted at time step $j$ has not been deleted by time $t$. In particular, we would like to bound this distribution by a binomial distribution over $K$ events with success probability $(\beta_i/K')^2$. But this is difficult to do directly as the events are not independent.

Instead, we fix $i$ and define the binary random variables $Y_t$ for $t=1,\ldots, T$, where 
\begin{flalign*}
Y_t=1 \ \text{iff} \ h(t) \ge i+1 \ \text{and} \ \nu_{\ge i}(t-1) \le \beta_i.
\end{flalign*}
The value $Y_t$ is 1 if and only if the height of the ball $t$ is at least $i+1$ despite the fact that the number of boxes that have load at least $i$ is currently below $\beta_i$.

Let $\omega_j$ represent the choices available to the $j$th ball. Clearly,
\begin{flalign*}
\Pr(Y_t=1|\omega_1,\ldots,\omega_{t-1},v_1,\ldots,v_{t-L}) \le \frac{\beta_i^2}{K'^2} \triangleq p_i.
\end{flalign*}
Consider the situation immediately after a time step $t'$ where a new ball has entered the system. Then there are $K$ balls in the system, that entered at times $u_1,u_2,\ldots, u_K$. Let $I(t')$ be the set of times $u_1,u_2,\ldots, u_K$. Then 
\begin{flalign*}
\sum_{t \in I(t')} Y_t = \sum_{i=1}^{K} Y_{u_i};
\end{flalign*}
that is, the summation over $I(t')$ is implicitly over the values of $Y_t$ for the balls in the system at time $t'$.

We may conclude that at any time $t' \le T$
\begin{flalign}\label{equRC:1}
\Pr\Bigg(\sum_{t \in I(t')} Y_t \ge k\Bigg) \le \Pr(B(K,p_i) \ge k).
\end{flalign}

Observe that conditioned on $\mathcal{E}_i$, we have $\mu_{\ge i+1}(t')=\sum_{t \in I(t')} Y_t$. Therefore
\begin{flalign}\label{equRC:2}
\Pr(\mu_{\ge i+1}(t') \ge k |\mathcal{E}_i )&=\Pr\Bigg(\sum_{t \in I(t')} Y_t \ge k |\mathcal{E}_i\Bigg) \\
&\le \frac{\Pr(B(K,p_i) \ge k)}{\Pr(\mathcal{E}_i)}
\end{flalign}
Thus:
\begin{flalign*}
\Pr(\neg \mathcal{E}_{i+1} |\mathcal{E}_i) \le \frac{T\Pr(B(K,p_i) \ge k) }{\Pr(\mathcal{E}_i)} 
\end{flalign*}
Since 
\begin{flalign*}
\Pr(\neg \mathcal{E}_{i+1}) \le \Pr(\neg \mathcal{E}_{i+1} |\mathcal{E}_i)  \Pr(\mathcal{E}_i) +\Pr(\neg \mathcal{E}_i),
\end{flalign*}
we have 
\begin{flalign}
\Pr(\neg \mathcal{E}_{i+1}) \le T\Pr(B(K,p_i) \ge k) + \Pr(\neg \mathcal{E}_i).
\end{flalign}
We can bound large deviations in the binomial distribution with the formula
\begin{flalign} \label{equRC:4}
\Pr(B(L,p_i) \ge ep_iK) \le e^{-p_iK}. 
\end{flalign}
We may then set $\beta_x = (K')^2/2eK$, $x=\ulcorner eK/K'\urcorner$ \cite{azar1999balanced}, and subsequently
\begin{flalign*}
\beta_{i}=eK\frac{\beta_{i-1}^2}{(K')^2} \ \text{for} \ i > x. 
\end{flalign*}
Note that the $\beta_i$ are chosen so that $\Pr(B(K,p_i) \ge \beta_{i+1}) \le e^{-p_i K}$.

With the choices $\mathcal{E}_x$, $x=\ulcorner eK/K'\urcorner$ holds \cite{azar1999balanced}, as there cannot be more than $(K')^2/2eK$ bins with $x$ balls. For $i \ge 1$,
\begin{flalign*}
\Pr(\neg \mathcal{E}_{x+i}) &\le \frac{T}{(K')^{c_1+c_2+1}} + \Pr(\neg \mathcal{E}_{x+i-1})\\
&=\frac{1}{(K')^{c_1+1}} + \Pr(\neg \mathcal{E}_{x+i-1}),
\end{flalign*}
provided that $p_iK \ge (c_1+c_2+1)\ln K'$. 

Let $i^*$ be the smallest value for which $p_{i^*-1}K \le (c_1+c_2+1) \ln K'$. Note that 
\begin{flalign}
\beta_{i+x}=\frac{K'}{2^{2^{i}}}(\frac{K'}{Ke}) \le \frac{K'}{2^{2^i}},
\end{flalign}
so $i^*=\ln \ln K'/\ln 2+ O(1)$. Then go through the standard tail bound technique in \cite{azar1999balanced} for $i \ge i^*+1$, we obtain that,
\begin{flalign*}
\Pr(\mu_{\ge x+i^*+O(c_1+c_2)} \ge 1) = O(\frac{1}{(K')^{c_1+c_2+1}}).
\end{flalign*}
So the probability that the maximum load is less than $K/K'+\ln \ln K'/\ln 2+ O(c_1+c_2)$ is bounded by
\begin{flalign*}
\Pr(\neg \mathcal{E}_{x+i^*+O(c_1+c_2)}) & \le \sum_{i=1}^{i^*}\frac{1}{(K')^{c_1+1}}+O(\frac{1}{(K')^{c_1+1}})\\
& \le O(\frac{1}{(K')^{c_1+1}}).
\end{flalign*}

\end{document}